\newtheorem{Definition}{Definition}
\newtheorem{Theorem}{Theorem}
\newtheorem{Lemma}{Lemma}
\newtheorem{Proposition}{Proposition}
\date{}
\begin{document}

\title{The Least-core and Nucleolus of Path Cooperative Games \thanks{The work is partially supported by National Natural Science Foundation of China (NSFC) (NO. 11271341).}}

\author{Qizhi Fang\inst{1}$^*$ \and Bo Li\inst{1}$^\dag$ \and Xiaohan Shan\inst{2}$^\ddag$ \and Xiaoming Sun\inst{2}$^\S$ }
\institute{School of Mathematical Sciences, Ocean University of China, Qingdao, China \\ $^*$\email{qfang@ouc.edu.cn}, $^\dag$\email{boli198907@gmail.com} \and Institute of Computing Technology, Chinese Academy of Sciences, Beijing, China \\ $^\ddag$\email{shanxiaohan@ict.ac.cn}, $^\S$\email{sunxiaoming@ict.ac.cn}}

\maketitle
\pagestyle{plain}
\bibliographystyle{plain}

\begin{abstract}
Cooperative games provide an appropriate framework for fair and stable profit
distribution in multiagent systems. In this paper, we study the algorithmic issues
on path cooperative games that arise from the situations where some commodity
flows through a network. In these games, a coalition of edges or vertices is
successful if it enables a path from the source to the sink in the network,
and lose otherwise. Based on dual theory of linear programming and the relationship
with flow games, we provide the characterizations on the CS-core, least-core
and nucleolus of path cooperative games. Furthermore, we show that the least-core
and nucleolus are polynomially solvable for path cooperative games defined on both
directed and undirected network.
\end{abstract}

\section{Introduction}

One of the important problems in cooperative game is how to distribute
the total profit generated by a group of agents to individual participants.
The prerequisite here is to make all the agents work together, {\it i.e.},
form a grand coalition. To achieve this goal,  the collective profit
should be distributed properly so as to minimize the incentive of
subgroups of agents to deviate and form coalitions of their own. This
intuition is formally captured by several solution concepts, such as the
core, the least-core, and the nucleolus, which will be
the focus of this paper.                                 


In this paper, we consider a kind of cooperative game models,
\emph{path cooperative games} (PC-games), arising from the situations where
some commodity (traffic, liquid or information) flows through a network.
In these games, each player controls an edge or a vertex of the network
(called edge path cooperative games or vertex edge path cooperative games,
respectively), a coalition of players wins if it enables a path from the
source to the sink, and lose otherwise. We will focus on the algorithmic
problems on game solutions of path cooperative games, especially core
related solutions.

Path cooperative games have a natural correspondence with flow games.
Flow games were first introduced by Kalai and Zemel \cite{kalai1982generalized} and
studied extensively by many researchers. When there are public arcs in the
network, the core of the flow game is nonempty if and only if there is a minimum
$(s,t)$-cut containing no public arcs. And in this case, the core can be characterized
by the minimum $(s,t)$-cuts\cite{kalai1982generalized,reijnierse1996simple}, and the nucleolus can also be
computed efficiently\cite{potters2006nucleolus,Deng:2009}. 
Recently, Aziz \emph{et al.} \cite{Aziz:2010} introduced the threshold versions of
monotone games, including PC-games as a special case. Yoram \cite{bachrach:2011}
showed that computing $\varepsilon$-core for threshold network flow games is
polynomial time solvable for unit capacity networks, and NP-hard for networks
with general capacities. For PC-games defined on series-parallel graphs, Aziz
\emph{et al.}\cite{Aziz:2010} showed that the nucleolus can be computed in
polynomial time. However, the complexity of computing the nucleolus for
general PC-games remains open, from the algorithmic point of view, the solution
concepts of general PC-games have not been systematically discussed.                


The algorithmic problems in cooperative games are especially interesting, since
except for the fairness and rationality requirements in the solution definitions,
computational complexity is suggested  be taken into consideration as another
measure of rationality for evaluating and comparing different solution concepts
(Deng and Papadimitriou~\cite{Deng:1994}). 
Until now, various interesting complexity and algorithmic results have been
investigated. On one hand, efficient algorithms have been proposed for computing
the core, the least-core and the nucleolus for, such as, assignment games
\cite{Solymosi:1994}, cardinality matching  games \cite{Kern:2003}, unit flow
games \cite{Deng:2009} and weighted voting games \cite{Elkind:2009}. On the
other hand, negative results are also given. For example, the problems of
computing the nucleolus and testing whether a given distribution belongs to the
core or the nucleolus are proved to be NP-hard for minimum spanning
tree games \cite{faigle1997complexity, faigle1998note}, flow games and linear
production games \cite{fang2002computational, Deng:2009}.   

The main contribution of this work is the efficient characterizations of the CS-core,
least-core and the nucleolus of PC-games,  based on linear programming technique
and the relationship with flow games.  These characterizations yield directly to
efficient algorithms for the related solutions. The organization of the
paper is as follows.

In section 2, the relevant definitions in cooperative game are introduced. In section
3, we first define PC-games (edge path cooperative game and vertex path
cooperative game), and then give the the characterizations of the core and CS-core.
Section 4 is dedicated to the efficient description of the least-core for PC-games.
In section 5, we prove that the nucleolus is polynomially solvable for both edge and
vertex path cooperative games.

\section{Preliminaries}\label{Preliminaries}


A cooperative game $\Gamma=(N,\gamma)$ consists of a player set
$N=\{1,2,\cdots,n\}$  and a characteristic function $\gamma: 2^N\rightarrow R$
with $\gamma(\emptyset)=0$. For each coalition $S\subseteq N$, $\gamma(S)$
represents the profit obtained by $S$ without help of other players. The set
$N$ is called the grand coalition. In what follows, we assume that
$\gamma(S)\geq 0$ for all $S\subseteq N$, and $\gamma(\emptyset) = 0$.

An imputation of $\Gamma$ is a payoff vector $x=(x_1,...x_n)$ such that
$\sum_{i\in N} x_i =\gamma(N)$ and $x_i \geq \gamma(\{i\}),\ \forall i\in N$.
The set of imputations is denoted by ${\cal I}(\Gamma)$.
Throughout this paper, we use the shorthand notation $x(S)=\sum_{i\in S}x_i$.
Given a payoff vector $x\in {\cal I}(\Gamma)$, the excess of coalition
$S\subseteq N$ with respect to $x$ is defined as: $e(x,S)=x(S)-\gamma(S)$. This value
measures the degree of $S$'s satisfaction with the payoff $x$.

\vspace{-3mm}
\subsubsection{Core.}

The \emph{core} of a game $\Gamma$, denoted by ${\cal C}(\Gamma)$,
is the set of payoff vectors satisfying that,  $x\in {\cal C}(\Gamma)$
if and only if $e(x, S)\geq 0$ for all $S\subseteq N$.
These constraints, called group rationality, ensure that no coalition
would have an incentive to split from the grand coalition N, and do better on its own.

\vspace{-3mm}
\subsubsection{Least-core.}

When ${\cal C}(\Gamma)$ is empty, it is meaningful to relax the group rationality
constraints by $e(x, S)\geq \varepsilon$ for all $S\subseteq N$. We shall find
the maximum value $\varepsilon^*$ such that the set $\{x\in {\cal I}(\Gamma):
e(x,S)\geq \varepsilon^*, \forall S\subseteq N\}$ is nonempty. This set of imputations
is called the \emph{least-core}, denoted by ${\cal LC}(\Gamma)$, and $\varepsilon^*$
is called the value of ${\cal LC}(\Gamma)$ or  ${\cal LC}$-value.

\vspace{-3mm}
\subsubsection{Nucleolus.}

Now we turn to the concept of the \emph{nucleolus}. A payoff vector
$x$ generates a $2^n$-dimensional excess vector
$\theta(x)=(e(x,S_1), \cdots,e(x,S_{2^n}))$, whose components are arranged in
a non-decreasing order. That is, $e(x,S_i)\leq e(x,S_j)$ for
$1\leq i < j \leq 2^n$. The nucleolus, denoted by $\eta(\Gamma)$, is defined
to be a payoff vector that lexicographically maximizes the excess vector
$\theta(x)$ over the set of imputations ${\cal I}(\Gamma)$. It was proved by
Schmeidler \cite{Schmeidler:1969} 
that the nucleolus of a game with the nonempty imputation set contains exactly
one element.


\vspace{-3mm}
\subsubsection{Monotone games and simple games.}

A game $\Gamma=(N,\gamma)$ is {\em monotone} if $\gamma(S')\leq \gamma(S)$
whenever $S'\subseteq S$. A game is called a simple game if it is a monotonic game
with $\gamma:2^N\rightarrow \{0,1\}$ such that $\gamma(\emptyset)=0$ and $\gamma(N)=1$.
Simple games can be usually used to model situations where there is a task to be
completed, a coalition is labeled as winning if and only if it can complete the task.
Formally, coalition $S\subseteq N$ is {\em winning} if  $\gamma(S)=1$, and {\em losing}
if $\gamma(S)=0$. A player $i$ is called a {\em veto} player if he or she belongs to
all winning coalitions. It is easy to see that, in a simple game, $i$ is a veto player
if and only if $\gamma(N)=1$ but $\gamma(N\setminus \{i\})=0$.

For simple games, Osborne\cite{Osborne:1994} and Elkind \emph{et al.} \cite{Elkind:2007}
gave the following result on the core and the nucleolus.

\begin{Lemma}\label{lem:core}
A simple game $\Gamma=(N,\gamma)$ has a nonempty core if and only if
there exists a veto player. Moreover,
\begin{enumerate}
  \item $x\in {\cal C}(\Gamma)$ if and only if $x_i=0$ for each $i\in N$ who is
  not a veto player;
  \item when ${\cal C}(\Gamma)\neq \emptyset$, the nucleolus of $\Gamma$
is given by $x_i=\frac{1}{k}$ if $i$ is a veto player and $x_i=0$ otherwise,
where $k$ is the number of veto players.
\end{enumerate}
\end{Lemma}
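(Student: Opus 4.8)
The plan is to settle the core first (the ``if and only if'' claim and item~1) and then bootstrap to the nucleolus (item~2).

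For the core, let $V$ denote the set of veto players and put $k=|V|$. If $k\ge 1$, I would consider the payoff vector $x^*$ with $x^*_i=1/k$ for $i\in V$ and $x^*_i=0$ otherwise and check directly that $x^*\in\mathcal{C}(\Gamma)$: a losing coalition $S$ has $e(x^*,S)=x^*(S)\ge 0=\gamma(S)$, while a winning coalition $S$ contains $V$, so $x^*(S)\ge x^*(V)=1=\gamma(S)$. The same two computations show that \emph{any} payoff vector that is nonnegative, sums to $1$, and vanishes outside $V$ lies in the core. Conversely, if $x\in\mathcal{C}(\Gamma)$, then the singleton constraints force $x\ge 0$ and efficiency forces $x(N)=1$; for a non-veto player $i$, picking a winning coalition $S$ with $i\notin S$ gives $x(S)\ge\gamma(S)=1=x(N)$, hence $x_j=0$ for all $j\notin S$ and in particular $x_i=0$. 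This proves the first sentence and item~1, and records the fact I will reuse below: for every $x\in\mathcal{C}(\Gamma)$ and every winning $S$ one has $x(S)=x(V)=1$.

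For the nucleolus, assume $\mathcal{C}(\Gamma)\neq\emptyset$, so $k\ge 1$. First I would observe that $\eta(\Gamma)\in\mathcal{C}(\Gamma)$: every imputation has $e(x,\emptyset)=0$, so a core element has minimum excess $0$ whereas a non-core imputation has a strictly negative excess, and therefore core elements lexicographically dominate. If $k=1$, item~1 says the core is the single point $x^*$, so $\eta(\Gamma)=x^*$; hence assume $k\ge 2$, in which case each singleton $\{i\}$ with $i\in V$ is losing (a winning singleton would force $V=\{i\}$), so $e(x,\{i\})=x_i$ for every core element $x$. Now I would classify the coalitions as seen by a core element $x$: all winning coalitions together with all losing coalitions disjoint from $V$ have excess $0$, and this block of coalitions — call its cardinality $m$ — is the same for every core element; every remaining coalition is losing with $S\cap V\neq\emptyset$ and has excess $\sum_{i\in S\cap V}x_i\ge 0$. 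Consequently, in the sorted vector $\theta(x)$ the first $m$ coordinates equal $0$ for every core element, and coordinate $m+1$ equals $\min_{i\in V}x_i$ when all these coordinates are positive and equals $0$ otherwise. Since the nucleolus lexicographically maximizes $\theta$ over the nonempty core, it must maximize coordinate $m+1$, i.e.\ maximize $\min_{i\in V}x_i$ subject to $x_i\ge 0$ and $\sum_{i\in V}x_i=1$; this maximum equals $1/k$ and is attained only at $x^*$, which is therefore the nucleolus.

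The two excess computations in the core argument are routine. The step I expect to be the main obstacle is the last one: one has to see that a core element with a vanishing veto-coordinate necessarily carries strictly more zero excesses than $x^*$ (so it cannot be lexicographically optimal), and that no coalition outside the fixed block of size $m$ can tie for the smallest positive excess — so that maximizing coordinate $m+1$ already pins down the nucleolus uniquely without any further rounds of the lexicographic comparison.
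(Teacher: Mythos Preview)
Your proof is correct. The paper does not actually prove this lemma: it is quoted as a known result, attributed to Osborne and to Elkind \emph{et al.}, and is used as a black box in the remainder of the paper. So there is no ``paper's proof'' to compare against; your argument stands on its own.

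A minor remark on your last paragraph: the worry you voice there is not really an obstacle. You have already shown that for any core element $x$, the first $m$ entries of $\theta(x)$ are zero and entry $m+1$ equals $\min_{i\in V}x_i$ (interpreting it as $0$ when some veto coordinate vanishes). Since maximizing $\min_{i\in V}x_i$ over the simplex $\{x\ge 0,\ \sum_{i\in V}x_i=1\}$ has the \emph{unique} maximizer $x^*$, the lexicographic comparison is already decided at coordinate $m+1$; whether or not several coalitions outside the fixed block tie for that value is irrelevant, because no other core element matches $x^*$ on coordinate $m+1$. So no ``further rounds'' are needed, and the argument is complete as written in your second paragraph.
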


\vspace{-4mm}
\subsubsection{CS-core.}

Taking coalition structure into consideration, we can arrive at another solution
concept, \emph{CS-core}. Given a cooperative game $\Gamma=(N,\gamma)$,
a coalition structure over $N$ is a partition of $N$, {\it i.e.}, a collection of
subsets $\mbox{\it CS}=\{C^1,\cdots,C^k\}$ with $\cup_{j=1}^k C^j=N$  and $C^i\cap C^j=\emptyset$
for $i\neq j$ and $i,j\in \{1,\cdots,k\}$. A vector $x=(x_1,\cdots,x_n)$
is a payoff vector for a coalition structure $\mbox{\it CS}=\{C^1,\cdots,C^k\}$ if
$x_i\geq 0$ for all $i\in N$, and $x(C^j)= \gamma(C^j)$ for each $j\in\{1,\cdots,k\}$.

In general, an outcome of the game $\Gamma$ is a pair $(\mbox{\it CS},x)$,
where {\it CS} is a coalition structure and $x$ is a corresponding payoff vector.
The {\it CS}-core of the game $\Gamma=(N,\gamma)$, denoted by {${\cal C}_{cs}(\Gamma)$,
is the set of outcomes $(\mbox{\it CS},x)$ satisfying the constraints of ``group rationality".
That is,
$${\cal C}_{cs}(\Gamma)=\{(\mbox{\it CS},x):\forall C\in \mbox{\it CS}, x(C)=\gamma(C)\ \mbox{and}  \
\forall  S\subseteq N, x(S)\geq \gamma(S)\}.$$

A stronger property that is also enjoyed by many practically useful games is superadditivity.
The game $\Gamma=(N,\gamma)$ is \emph{superadditive} if it satisfies
$\gamma(S_1\cup S_2)\geq \gamma(S_1)+\gamma(S_2)$ for every pair of disjoint coalitions
$S_1,S_2\subseteq N$. This implies that the agents can earn at least as much profit by
working together within the grand coalition. Therefore, for superadditive games, it is
always assumed that the agents form the grand coalition.
For a (non-superadditive) game $\Gamma=(N,\gamma)$, we can define a new game
$\Gamma^*=(N,\gamma^*)$ by setting
$$\gamma^*(S)=\max_{{\it CS} \in {\cal CS}_{_S}}\gamma({\it CS}),
\ \forall S\subseteq N $$
where ${\cal CS}_{_S}$ denotes the space of all coalition
structures  over $S$ and $\gamma({\it CS})=\sum_{C\in {\it CS}}\gamma(C)$.
It is easy to verify that the  game $\Gamma^*$ is superadditive,
and it is called the \emph{superadditive cover} of $\Gamma$.
The relationship between the {\it CS}-core of $\Gamma$ and the core of its superadditive cover $\Gamma^*$
is presented in the following lemma \cite{greco2011complexity, chalkiadakis2011computational}.

\begin{Lemma}\label{CS-core}
A cooperative game $\Gamma=(N,\gamma)$ has nonempty {\it CS}-core if and only if its superadditive
cover $\Gamma^*=(N,\gamma^*)$ has a non-empty core. Moreover, if
${\cal C}(\Gamma^*)\neq \emptyset$, then ${\cal C}_{cs}(\Gamma)={\cal C}(\Gamma^*)$.
\end{Lemma}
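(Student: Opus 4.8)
The plan is to prove both directions by unfolding the definitions of the CS-core and of the superadditive cover $\Gamma^*$, using the observation that an optimal coalition structure for $N$ in the definition of $\gamma^*(N)$ is exactly what ties together a CS-core outcome and a core payoff of $\Gamma^*$. First I would record the two elementary facts: (i) $\gamma^*(N)=\max_{\mathit{CS}\in{\cal CS}_N}\sum_{C\in\mathit{CS}}\gamma(C)$, so any coalition structure $\mathit{CS}=\{C^1,\dots,C^k\}$ attaining this maximum satisfies $\sum_j\gamma(C^j)=\gamma^*(N)$; and (ii) for every $S\subseteq N$ and every payoff vector $x$, $x(S)\ge\gamma^*(S)$ implies $x(S)\ge\gamma(S)$, since $\gamma^*(S)\ge\gamma(S)$ (take the trivial coalition structure $\{S\}$).

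For the ``if'' direction, suppose $x\in{\cal C}(\Gamma^*)$, so $x(N)=\gamma^*(N)$ and $x(S)\ge\gamma^*(S)$ for all $S$. Let $\mathit{CS}^\star=\{C^1,\dots,C^k\}$ be a coalition structure realizing $\gamma^*(N)$. Then $\sum_j x(C^j)=x(N)=\gamma^*(N)=\sum_j\gamma(C^j)$, while $x(C^j)\ge\gamma^*(C^j)\ge\gamma(C^j)$ for each $j$; summing forces $x(C^j)=\gamma(C^j)$ for every $j$. Moreover $x_i\ge0$ for all $i$ (taking $S=\{i\}$ and using $\gamma^*\ge\gamma\ge0$), and $x(S)\ge\gamma(S)$ for all $S$ by fact (ii). Hence $(\mathit{CS}^\star,x)\in{\cal C}_{cs}(\Gamma)$, which is therefore nonempty, and in fact $x\in{\cal C}_{cs}(\Gamma)$ in the sense of the lemma's identification.

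For the ``only if'' direction, suppose $(\mathit{CS},x)\in{\cal C}_{cs}(\Gamma)$ with $\mathit{CS}=\{C^1,\dots,C^k\}$. I claim $x\in{\cal C}(\Gamma^*)$. For any $S\subseteq N$ and any coalition structure $\{D^1,\dots,D^m\}$ of $S$, group rationality gives $x(D^\ell)\ge\gamma(D^\ell)$, so $x(S)=\sum_\ell x(D^\ell)\ge\sum_\ell\gamma(D^\ell)$; taking the maximum over all coalition structures of $S$ yields $x(S)\ge\gamma^*(S)$. Applying this with $S=N$ gives $x(N)\ge\gamma^*(N)$; conversely $x(N)=\sum_j x(C^j)=\sum_j\gamma(C^j)\le\gamma^*(N)$ by the definition of $\gamma^*(N)$ as a maximum. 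Hence $x(N)=\gamma^*(N)$ and $x(S)\ge\gamma^*(S)$ for all $S$, i.e. $x\in{\cal C}(\Gamma^*)$. Combining the two directions also shows ${\cal C}_{cs}(\Gamma)={\cal C}(\Gamma^*)$ whenever either side is nonempty.

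I do not expect a serious obstacle here: the argument is a direct double-inclusion. The one point that needs a little care is the bookkeeping in the ``if'' direction — concluding $x(C^j)=\gamma(C^j)$ for each individual $j$ from the aggregate equality plus the per-block inequalities — and making sure the correct coalition structure (an optimal one for $\gamma^*(N)$) is the one paired with $x$ in the CS-core outcome. Since the result is already attributed to prior work, the proof can be kept brief.
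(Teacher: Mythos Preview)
Your argument is correct: the double-inclusion works exactly as you describe, and the only delicate step --- turning the aggregate equality $\sum_j x(C^j)=\sum_j\gamma(C^j)$ together with the per-block inequalities $x(C^j)\ge\gamma(C^j)$ into equalities block by block --- is handled properly. Note, however, that the paper does not give its own proof of this lemma; it is stated as a known result with citations to \cite{greco2011complexity,chalkiadakis2011computational}, so there is no in-paper argument to compare against. Your self-contained proof is the standard one and would be appropriate if the paper wished to be self-contained.
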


\section{Path Cooperative Game and Its Core}\label{PC-game and its core}

Let $D=(V,E;s,t)$ be a connected flow network with unit arc capacity (called
unit flow network), where $V$ is the vertex set, $E$ is the arc set, $s,t\in V$
are the source and the sink of the network respectively. In this paper, an
$(s,t)$-{\em{path}} is referred to a {\em directed} path from $s$ to
$t$ that visits each vertex in $V$ at most once.

Let $U, W\subseteq V$ be a partition of the vertex set $V$ such that $s\in U$
and $t\in W$, then the set of arcs with heads in $U$ and tails in $W$ is called
an $(s,t)$-{\em edge-cut}, denoted by ${\bar E}\subseteq E$. An $(s,t)$-{\em
vertex-cut} is a vertex subset ${\bar V}\subseteq V\setminus \{s,t\}$ such that
$D\setminus {\bar V}$ is disconnected. An $(s,t)$-edge(vertex)-cut is minimum
if its cardinality is minimum. In the remainder of the paper, $(s,t)$-edge(vertex)-cuts
will be abbreviated as edge(vertex)-cut $S$ for short. Given an edge-cut ${\bar E}$,
we denote its indicator vector by ${\cal H}_{\bar E}\in \{0,1\}^{|E|}$, where
${\cal H}_{\bar E}(e)=1$ if $e\in {\bar E}$, and $0$ otherwise. The indicator
vector of a vertex-cut is defined analogously.

Now we introduce two kinds of path cooperative
games (PC-games), \emph{edge path cooperative games}
and \emph{vertex path cooperative games}.

\begin{Definition} {\bf (Path cooperative game, PC-game)} Let $D=(V,E;s,t)$ be a unit
flow network.
\begin{enumerate}
\item The associated edge path cooperative game (EPC-game)
$\Gamma_E=(E,\gamma_E)$ is:

$-$ The player set is $E$;

$-$ $\forall S\subseteq E$, $\left\{ \begin{array}{ll}
\gamma_E(S)=1 & \ \mbox{if} \ D[S]  \ \mbox{admits an }(s,t)\mbox{-path};\\
\gamma_E(S)=0 & \ \mbox{otherwise.} \end{array} \right.$

Here, $D[S]$ denotes the induced subgraph with vertex set $V$ and edge set $S$.

\vspace{3mm}
\item The associated vertex path cooperative game(VPC-game)
$\Gamma_V=(V,\gamma_V)$ is:

$-$ The player set is $V\setminus \{s,t\}$;

$-$ $\forall T\subseteq V$, $\left\{ \begin{array}{ll}
\gamma_V(T)=1 & \ \mbox{if} \ \mbox{induced subgraph } D[T]  \
\mbox{admits an }(s,t)\mbox{-path};\\
\gamma_V(T)=0 & \ \mbox{otherwise.} \end{array} \right.$
\end{enumerate}
\end{Definition}

Clearly, PC-games fall into the class of simple games.
Therefore, we can get the necessary and sufficient condition of the non-emptiness
of the core directly from Lemma \ref{lem:core}.

\begin{Proposition}\label{veto player}
Given an EPC-game $\Gamma_E$ and a VPC-game $\Gamma_V$ associated with network
$D=(V,E;s,t)$, then
\begin{enumerate}
\item ${\cal C}(\Gamma_E)\neq \emptyset$ if and only if the size of the minimum
edge-cut of $D$ is 1;
\item ${\cal C}(\Gamma_V)\neq \emptyset$ if and only if the size of the minimum
vertex-cut of $D$ is 1.
\end{enumerate}
\end{Proposition}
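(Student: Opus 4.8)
The plan is to apply Lemma~\ref{lem:core} to the simple games $\Gamma_E$ and $\Gamma_V$, so the entire task reduces to identifying the veto players and translating the existence of a veto player into a statement about minimum cuts. First I would recall that by Lemma~\ref{lem:core}, ${\cal C}(\Gamma_E)\neq\emptyset$ if and only if $\Gamma_E$ has a veto player, i.e.\ a player belonging to every winning coalition, and in a simple game $e$ is a veto player exactly when $\gamma_E(E)=1$ but $\gamma_E(E\setminus\{e\})=0$. Since $D$ is connected with source $s$ and sink $t$, the grand coalition $E$ trivially admits an $(s,t)$-path, so $\gamma_E(E)=1$ always holds; the content is entirely in the second condition.

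For the EPC-game, I would argue the following equivalence: an edge $e$ is a veto player $\iff$ removing $e$ destroys all $(s,t)$-paths $\iff$ $\{e\}$ is itself an $(s,t)$-edge-cut. The forward and backward directions here are just unwinding the definition of edge-cut as a set of arcs whose deletion disconnects $s$ from $t$ (equivalently, separates a partition $U\ni s$, $W\ni t$). Hence $\Gamma_E$ has a veto player if and only if some singleton $\{e\}$ is an $(s,t)$-edge-cut, which is precisely the statement that the minimum edge-cut of $D$ has size $1$ (a cut of size $1$ obviously contains such a singleton, and conversely a singleton cut witnesses minimum cut size $1$ since $D$ is connected so the minimum cut has size at least $1$). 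Combining with Lemma~\ref{lem:core} gives part~(1). Part~(2) is entirely parallel: $\gamma_V(V)=1$ since $D$ is connected, a vertex $v\in V\setminus\{s,t\}$ is a veto player iff $D\setminus\{v\}$ admits no $(s,t)$-path iff $\{v\}$ is an $(s,t)$-vertex-cut, and this happens for some $v$ iff the minimum vertex-cut has size $1$.

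I do not expect a genuine obstacle here; the proposition is essentially a dictionary translation. The only point requiring a small amount of care is the boundary case where no $(s,t)$-vertex-cut exists at all—namely when $s$ and $t$ are adjacent, so that $D\setminus \bar V$ stays connected (retains an $(s,t)$-edge) for every $\bar V\subseteq V\setminus\{s,t\}$. In that degenerate situation the VPC-game has no losing coalition containing all of $V\setminus\{s,t\}$, every player is vacuously a veto player, the core is nonempty, and one should check this is consistent with the convention that the minimum vertex-cut ``size'' is then taken to be $1$ (or handle it as a separate trivial case); the corresponding edge case ($s,t$ joined by a single arc with no parallel routes) is automatically covered by the edge-cut formulation. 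Apart from dispatching this corner, the proof is immediate from Lemma~\ref{lem:core}.
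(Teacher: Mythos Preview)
Your proposal is correct and matches the paper's approach exactly: the paper does not give a formal proof of this proposition but simply states that it follows ``directly from Lemma~\ref{lem:core}'' once one observes that PC-games are simple games, and your argument supplies precisely those routine details (veto player $\Leftrightarrow$ singleton cut $\Leftrightarrow$ minimum cut size $1$). Your remark about the degenerate case where $s$ and $t$ are adjacent is a reasonable caveat that the paper does not address.
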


Moreover, when the core of a PC-game is nonempty, the only edge (vertex) in the
edge(vertex)-cut is a veto player, both the core and the nucleolus can be given directly.
In the following two sections, we only consider PC-games with empty core.

We note that PC-games also have a natural correspondence with \emph{flow games} and in what
follows, we will reveal the close relationship between flow games and PC-games.
Let $D=(V,E;s,t)$ be a unit flow network. Given $N\subseteq E$, each edge $e\in N$ is controlled
by one player,  {\it i.e.}, we can identify the set of edges $N$ with the set of players.
Edges not under control of any players, in $E\setminus N$, are called public arcs; they
can be used freely by any coalition. Thus, a unit flow network with player set $N$ is
denoted as $D\langle N\rangle=(V,E;s,t)$

\begin{Definition}{\bf (Simple flow game)} The simple flow game  $\Gamma_f\langle N\rangle=(N,\gamma)$
associated with the unit network $D\langle N\rangle$ is defined as:
\begin{enumerate}
\item The player set is $N$;
\item $\forall S\subseteq N$, $\gamma(S)$ is the value of the max-flow from $s$ to $t$
in $D[S\cup (E\setminus N)]$ (using only the edges in $S$ and public edges).
\end{enumerate}
\end{Definition}

Flow game is a classical combinatorial optimization game, which has been extensively studied.
The core of  the  flow game $\Gamma_f\langle N\rangle$ is nonempty if and only if there is a
minimum edge-cut without public edges \cite{reijnierse1996simple}.                    
In this case, the core is exactly the convex hull of the indicator vectors
of minimum edge-cuts without public edges in $D$ \cite{kalai1982generalized, reijnierse1996simple},
and the nucleolus can also be computed in polynomial time \cite{potters2006nucleolus,Deng:2009}.

Now we turn to discuss the {\it CS}-core of PC-games.
It is easy to see that for the network $D$ without public edges,
the associated flow game is the superadditive cover of the corresponding EPC-game. Thus,
the nonemptiness of {\it CS}-core of EPC-game is followed directly from Lemma \ref{CS-core}.

\begin{Proposition}\label{cs-core of EPC}
Given an EPC-game $\Gamma_E$ associated with network $D=(V,E;s,t)$, then the {\it CS}-core of $\Gamma_E$
is nonempty and it is exactly the convex hull of the indicator vectors
of minimum edge-cuts of  $D$.
\end{Proposition}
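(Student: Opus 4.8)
The plan is to invoke Lemma~\ref{CS-core} together with the known characterization of the core of the simple flow game. First I would verify the claim made informally in the text: when the network $D=(V,E;s,t)$ has no public edges (i.e.\ $N=E$), the superadditive cover $\Gamma_E^*$ of the EPC-game $\Gamma_E$ coincides with the simple flow game $\Gamma_f\langle E\rangle$. For this, fix a coalition $S\subseteq E$. On the one hand, any collection of pairwise edge-disjoint $(s,t)$-paths inside $D[S]$ gives a coalition structure over $S$ in which each path-class has value $1$ under $\gamma_E$, so $\gamma_E^*(S)\ge k$ whenever $S$ contains $k$ edge-disjoint $(s,t)$-paths; conversely, in any partition $CS$ of $S$, each class $C$ with $\gamma_E(C)=1$ contains an $(s,t)$-path, and these paths are edge-disjoint since the classes are disjoint, so $\gamma_E^*(S)$ is at most the maximum number of edge-disjoint $(s,t)$-paths in $D[S]$. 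Since the network has unit capacities, Menger's theorem (equivalently, integral max-flow) says this maximum equals the max-flow value from $s$ to $t$ in $D[S]$, which is precisely $\gamma(S)$ for the simple flow game $\Gamma_f\langle E\rangle$. Hence $\gamma_E^*=\gamma$, i.e.\ $\Gamma_E^*=\Gamma_f\langle E\rangle$.

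Next I would apply the cited results on simple flow games: since $D$ has no public edges, there trivially exists a minimum edge-cut containing no public edges, so by \cite{reijnierse1996simple} the core ${\cal C}(\Gamma_f\langle E\rangle)$ is nonempty, and by \cite{kalai1982generalized,reijnierse1996simple} it equals the convex hull of the indicator vectors ${\cal H}_{\bar E}$ of the minimum edge-cuts of $D$ (all of which are automatically "without public edges"). Combining this with the first step, $\Gamma_E^*$ has nonempty core. Now Lemma~\ref{CS-core} yields that $\Gamma_E$ has nonempty {\it CS}-core and that ${\cal C}_{cs}(\Gamma_E)={\cal C}(\Gamma_E^*)={\cal C}(\Gamma_f\langle E\rangle)$, which is exactly the convex hull of the indicator vectors of the minimum edge-cuts of $D$. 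This completes the proof.

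The only genuinely substantive point is the identification $\Gamma_E^*=\Gamma_f\langle E\rangle$ — everything else is a direct citation. I expect the main (though still routine) obstacle there to be the appeal to integrality: one must argue that the fractional max-flow value over $D[S]$ is achieved by an integral flow, hence decomposes into that many edge-disjoint $(s,t)$-paths (possibly plus circulations, which can be discarded). This is standard for unit-capacity networks, so no difficulty is anticipated; I would state it in one line citing Menger's theorem / the integrality of max-flow on integral-capacity networks. A minor care point is that $D[S]$ should be read as the subgraph on vertex set $V$ with edge set $S$ (consistent with the definition given for $\gamma_E$), so that "admits an $(s,t)$-path" and "max-flow in $D[S]$" refer to the same subnetwork.
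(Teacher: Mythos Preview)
Your proposal is correct and follows exactly the route the paper takes: identify the superadditive cover of $\Gamma_E$ with the simple flow game $\Gamma_f\langle E\rangle$, invoke the cited characterization of ${\cal C}(\Gamma_f\langle E\rangle)$ as the convex hull of minimum edge-cut indicators, and then apply Lemma~\ref{CS-core}. The only difference is that you spell out the identification $\Gamma_E^*=\Gamma_f\langle E\rangle$ via Menger/integral max-flow, whereas the paper simply asserts this as ``easy to see'' in the paragraph preceding the proposition.
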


For a VPC-game, we can also establish some relationship with a flow game.
Given a network $D=(V,E;s,t)$, we transform it into a new network $D_V$ in the following way.

(1) For each $v\in V\setminus \{s,t\}$, split it into two distinct vertices $v'$ and $v''$;

(2) Connect $v'$ and $v''$ by a new directed edge $e_v=(v',v'')$. The set of all such edges is denoted by $E_V$;

(3) For original edge $e=(u,v)\in E$, transform it into a new edge $e=(u'',v')$ in $D_V$ ($s=s'=s''$ and $t=t'=t''$).

\vspace{-0.5cm}
\begin{figure}[htbp]
  \centering
  \includegraphics[width=0.50\textwidth]{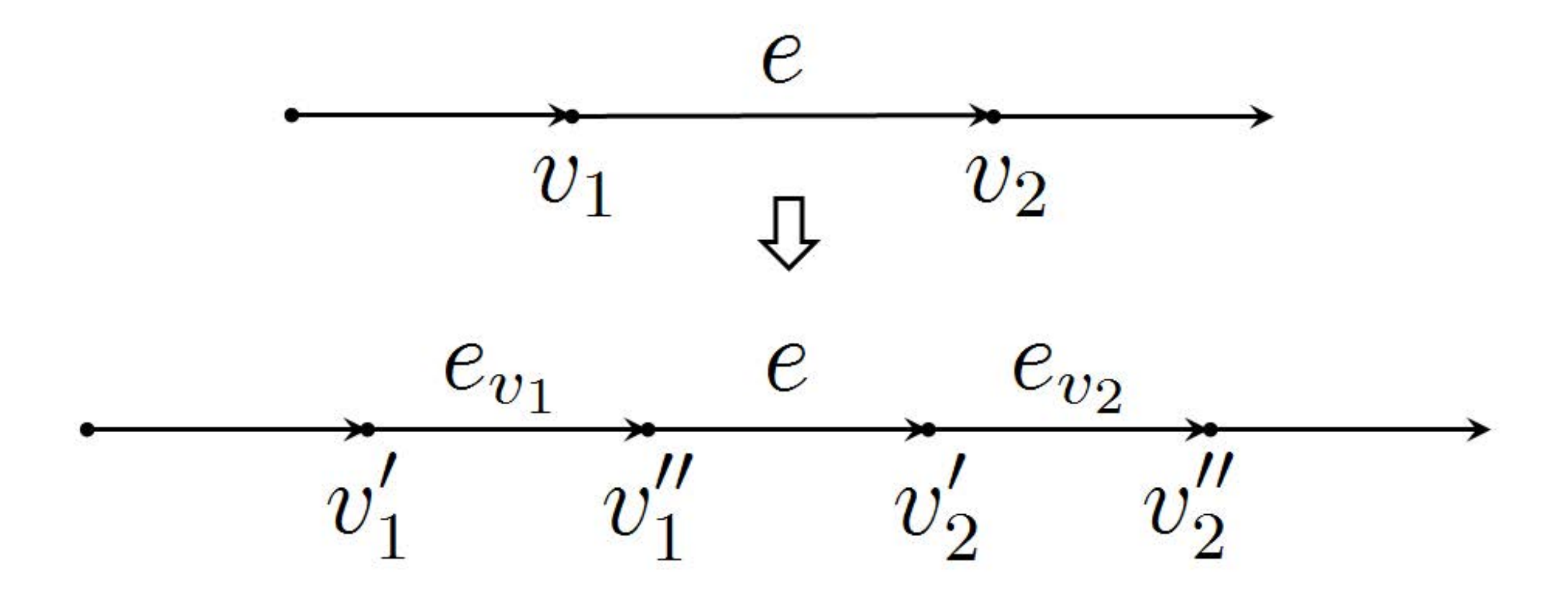}
\end{figure}
\vspace{-0.5cm}

In the new constructed network $D_V$, the player set is just the set $E_V$ and all the other
edges are viewed as public edges. It is easy to show that in the new network $D_V$, there
must be a minimum edge-cut containing only edges in $E_V$. Hence, we can verify that
the flow game associated with the network $D_V\langle E_V\rangle$ is the superadditive
cover of the corresponding VPC-game defined on $D$. Similarly,
the nonemptiness of {\it CS}-core of VPC-game is followed from Lemma \ref{CS-core} and
the results of core nonemptiness of flow games.

\begin{Proposition}\label{cs-core of VPC}
Given an VPC-game $\Gamma_V$ associated with network $D=(V,E;s,t)$, then the {\it CS}-core
of $\Gamma_V$ is nonempty and it is exactly the convex hull of the indicator vectors
of minimum vertex-cuts of $D$.
\end{Proposition}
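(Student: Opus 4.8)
The plan is to realize $\Gamma_V$, via the split network $D_V$, as the superadditive cover of a simple flow game, and then to transport the known core description of flow games through Lemma~\ref{CS-core}.

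First I would fix the bijection $v\mapsto e_v$ between the player set $V\setminus\{s,t\}$ of $\Gamma_V$ and the edge set $E_V$ of $D_V$; under it, payoff vectors and $\{0,1\}$-indicator vectors of the two games are identified. The heart of the argument is to prove, for every $T\subseteq V\setminus\{s,t\}$ and the corresponding $S=\{e_v:v\in T\}\subseteq E_V$, the identity $\gamma(S)=\gamma_V^*(T)$, where $\gamma$ is the characteristic function of the flow game $\Gamma_f\langle E_V\rangle$ and $\gamma_V^*$ that of the superadditive cover $\Gamma_V^*$ of $\Gamma_V$. Since all capacities in $D_V$ equal $1$, Menger's theorem identifies $\gamma(S)$ with the maximum number of pairwise edge-disjoint $(s,t)$-paths in the subnetwork on $S$ together with the public edges; and because every vertex-edge $e_v$ has capacity $1$, such a family corresponds exactly to a family of internally vertex-disjoint $(s,t)$-paths of $D$ whose internal vertices all lie in $T$. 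The maximum size of the latter is precisely $\gamma_V^*(T)$, i.e.\ the maximum of $\sum_{C\in\pi}\gamma_V(C)$ over all partitions $\pi$ of $T$, since a partition of total value $k$ carries $k$ disjoint vertex subsets each admitting an $(s,t)$-path, hence $k$ vertex-disjoint paths, and conversely. Thus, under the identification of players, $\Gamma_f\langle E_V\rangle=\Gamma_V^*$.

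Next I would check that $\Gamma_f\langle E_V\rangle$ has a nonempty core, so that Lemma~\ref{CS-core} gives ${\cal C}_{cs}(\Gamma_V)={\cal C}(\Gamma_V^*)={\cal C}(\Gamma_f\langle E_V\rangle)$. By \cite{kalai1982generalized,reijnierse1996simple}, this reduces to exhibiting a minimum $(s,t)$-edge-cut of $D_V$ containing no public edge: starting from any minimum edge-cut, one repeatedly replaces a public (original-edge) copy $(u'',v')$ in it by the vertex-edge $e_v$ (or $e_u$ when $v=t$), which keeps the set an $(s,t)$-cut of no larger size, and after finitely many steps arrives at a cut contained in $E_V$. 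One then observes that a subset $\{e_v:v\in\bar V\}\subseteq E_V$ is an $(s,t)$-edge-cut of $D_V$ if and only if every $(s,t)$-path of $D$ meets $\bar V$, i.e.\ $\bar V$ is an $(s,t)$-vertex-cut of $D$, and that the two have equal cardinality. Hence the minimum edge-cuts of $D_V$ without public edges are in cardinality- and indicator-preserving bijection with the minimum vertex-cuts of $D$. Combining this with the flow-game core description --- the core of $\Gamma_f\langle E_V\rangle$ is the convex hull of the indicator vectors of its minimum edge-cuts without public edges \cite{kalai1982generalized,reijnierse1996simple} --- yields ${\cal C}_{cs}(\Gamma_V)=\mathrm{conv}\{{\cal H}_{\bar V}:\bar V\ \text{a minimum }(s,t)\text{-vertex-cut of }D\}$, as claimed.

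I expect the main obstacle to be the identity $\gamma(S)=\gamma_V^*(T)$: it is exactly the unit capacity of the vertex-edges $e_v$ that turns edge-disjointness in $D_V$ into internal vertex-disjointness in $D$, and some care is needed with paths through $s$ and $t$ and with the degenerate case in which $s$ and $t$ are adjacent in $D$ (where no $(s,t)$-vertex-cut exists, $\Gamma_V$ is the trivial all-winning game, and the statement is to be read vacuously). The remaining points --- that the replacement step genuinely produces a minimum edge-cut inside $E_V$, that edges incident to $s$ or $t$ are handled correctly, and that the bijection between minimum $E_V$-edge-cuts of $D_V$ and minimum vertex-cuts of $D$ matches indicator vectors under $e_v\leftrightarrow v$ --- are routine bookkeeping.
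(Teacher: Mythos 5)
Your proposal follows essentially the same route as the paper, which proves this proposition only by the sketch preceding it (construct the split network $D_V$ with player edges $E_V$, assert that some minimum edge-cut avoids the public edges, identify the flow game on $D_V\langle E_V\rangle$ with the superadditive cover of $\Gamma_V$, and invoke Lemma~\ref{CS-core} together with the known core description of flow games). Your write-up is correct and simply supplies the details the paper leaves as ``easy to show'' --- the identity $\gamma(S)=\gamma_V^*(T)$ via unit capacities, the public-edge replacement argument, and the cut bijection --- including the degenerate case of $s$ adjacent to $t$.
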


\section{Least-core of PC-Games}\label{Least-core of PC-Game}

In this section, we first discuss the least-core of EPC-games. Throughout this
section, $\Gamma_E$ is an EPC-game associated with the network $D=(V,E;s,t)$ with
$|E|=n$. Denote by ${\cal P}$ the set of all ($s$,$t$)-path in $D$, and
$|{\cal P}|=m$. According to the definitions of EPC-game and the least-core, it
is shown that ${\cal LC}(\Gamma_E)$ can be formulated as the following linear program:

\begin{equation}\label{reduced least core}
\begin{array}{ll}
\mbox{max}& \ \varepsilon\\
\mbox{s.t.}&\left\{
\begin{array}{ll}
 x(E)=1\\
 x(P)\ge 1+\varepsilon  & \quad \forall P\in {\cal P}\\
 x_i \ge 0 & \quad \forall~i \in E
\end{array}\right.
\end{array}
\end{equation}

In spite that the number of the constrains in (\ref{reduced least core}) may be
exponential in $|E|$, the ${\cal LC}$-value and a least-core imputation
can be found efficiently by ellipsoid algorithm with a polynomial-time separation
oracle: Let $(x,\varepsilon)$ be a candidate solution for LP$({\cal LC}_E)$.
We first check whether constraints $x(E)=1$ and $x(e)\geq 0$ $(\forall e\in E)$
are satisfied. Then, checking whether $x(P)\geq 1+\varepsilon$ $(\forall P\in
{\cal P})$ are satisfied is transformed to solving the shortest $(s,t)$-path
in $D$ with respect to the edge length $x(e)$ ($\forall e\in E$), and this can
aslo be done in polynomial time.


In what follows, we aim at giving a succinct characterization of the
least-core for EPC-games. First give the linear program model of the max-flow
problem on $D$ and its dual:
\begin{equation}\label{prime}
\mbox{LP(flow):} \ \ \ \ \begin{array}{ll}
\mbox{max} & \ \sum_{j=1}^{m}{y_{j}}\\
\mbox{s.t.}&\left\{
\begin{array}{ll}
\sum\limits_{P_j:e_i\in {P_j}}y_j\le 1 &\quad i=1,2,...,n \\
y_j\ge 0 & \quad  j=1,2,...,m
\end{array}\right.
\end{array}\hspace{3mm}  {}
\end{equation}

\begin{equation}\label{dual}
\mbox{DLP(flow)}:~
\begin{array}{ll}
\mbox{min} & \ \sum_{i=1}^{n}{x_{i}}\\
\mbox{s.t.}&\left\{
\begin{array}{ll}
 \sum_{e_i:e_i\in P_j}x_i \ge 1 & \quad j=1,2,...,m \\
 x_{i}\ge 0 & \quad i=1,...,n
\end{array}\right.
\end{array}
\end{equation}

Due to {\em max-flow and min-cut} theorem, the optimum value of (\ref{prime})
and (\ref{dual}) are equal, and the set of optimal solutions of (\ref{dual})
is exactly the convex hull of the indicator vectors of the minimum edge-cut of $D$,
which is denoted by $\mathbb{C}_E$. On the other hand, it is known that the
core of the flow game $\Gamma_f$ defined on $D\langle E\rangle$ is also the
convex hull of the indicator vectors of the minimum edge-cut of $D$. Hence,
we have

\begin{Theorem}\label{least core theorem}
Let $\Gamma_E$ and $\Gamma_f$ be an EPC-game and a flow game defined on $D=(V,E;s,t)$,
respectively, $f^*$ be the value of the max-flow of $D$. Then,
$$x\in {\cal {LC}}(\Gamma_E) \ \mbox{if and only if} \ x=z/{f^*}
 \ \mbox{for some} \ z\in\mathbb{C}_E.$$
\end{Theorem}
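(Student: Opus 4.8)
The plan is to identify the least-core LP \eqref{reduced least core} with a scaled version of the dual LP \eqref{dual} of the max-flow problem. The key observation is that the constraints of \eqref{reduced least core}, apart from the normalization $x(E)=1$, are precisely $x(P) \ge 1+\varepsilon$ for every $(s,t)$-path $P$, together with nonnegativity; this is structurally the same feasible region as \eqref{dual}, which asks for nonnegative $x$ with $\sum_{e_i \in P_j} x_i \ge 1$ for every path $P_j$, except for the scaling factor $1+\varepsilon$ versus $1$ and the extra equality $x(E)=1$.

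First I would determine the optimal value $\varepsilon^*$ of \eqref{reduced least core}. Given any feasible $(x,\varepsilon)$, the vector $x$ satisfies $x \ge 0$ and $x(P) \ge 1+\varepsilon$ for all $P \in \mathcal{P}$, so $x/(1+\varepsilon)$ is feasible for \eqref{dual}; hence $\sum_i x_i/(1+\varepsilon) \ge f^*$ by LP duality (optimal value of \eqref{dual} equals the max-flow value $f^*$), i.e. $1 = x(E) \ge (1+\varepsilon) f^*$, giving $\varepsilon \le 1/f^* - 1$. Conversely, take any optimal dual solution $z \in \mathbb{C}_E$ of \eqref{dual}, so $z \ge 0$, $z(P) \ge 1$ for all $P$, and $z(E) = f^*$. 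Then $x := z/f^*$ satisfies $x(E) = 1$, $x \ge 0$, and $x(P) \ge 1/f^*$ for all $P$, so $(x, 1/f^* - 1)$ is feasible for \eqref{reduced least core}. This shows $\varepsilon^* = 1/f^* - 1$ and that every $z/f^*$ with $z \in \mathbb{C}_E$ lies in $\mathcal{LC}(\Gamma_E)$.

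For the reverse inclusion, let $x \in \mathcal{LC}(\Gamma_E)$, so $x(E)=1$, $x \ge 0$, and $x(P) \ge 1+\varepsilon^* = 1/f^*$ for all $P \in \mathcal{P}$. Set $z := f^* x$. Then $z \ge 0$, $z(P) \ge 1$ for all $P$, so $z$ is feasible for \eqref{dual}, and $z(E) = f^* x(E) = f^*$, which is exactly the optimal value of \eqref{dual}; hence $z$ is an optimal solution of \eqref{dual}, i.e. $z \in \mathbb{C}_E$, and $x = z/f^*$. This completes the equivalence. The identification $\mathbb{C}_E = \mathcal{C}(\Gamma_f)$ as the convex hull of indicator vectors of minimum edge-cuts is already recorded in the paragraph preceding the theorem (via max-flow/min-cut and the known description of the flow-game core), so nothing further is needed there.

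I do not anticipate a serious obstacle: the argument is essentially a dictionary between the two LPs, and the only point requiring a moment of care is making sure the normalization $x(E)=1$ interacts correctly with the scaling, namely that feasibility of $x/(1+\varepsilon)$ in \eqref{dual} forces $(1+\varepsilon)f^* \le 1$ and that equality is attained exactly at scaled optimal dual solutions. One should also note in passing that $\mathcal{LC}(\Gamma_E)$ is automatically contained in the imputation set $\mathcal{I}(\Gamma_E)$: since $\Gamma_E$ is a simple game with empty core it has no veto player, so for each $i$ there is a path avoiding edge $i$, and one can check the single-player rationality constraints $x_i \ge \gamma_E(\{i\})$ are vacuous here because $\gamma_E(\{i\}) = 0$ unless a single edge forms an $(s,t)$-path, which is excluded when $f^* \ge 2$ (the empty-core case). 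Thus the LP \eqref{reduced least core} already captures $\mathcal{LC}(\Gamma_E)$ faithfully.
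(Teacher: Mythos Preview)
Your proof is correct and follows essentially the same approach as the paper: both arguments identify the least-core LP \eqref{reduced least core} with the dual max-flow LP \eqref{dual} via the scaling $z=x/(1+\varepsilon)$ (equivalently $x=(1+\varepsilon)z$), concluding that optimal solutions correspond to optimal dual solutions scaled by $1/f^*$. Your version is somewhat more explicit in separating the two inclusions and in checking that the optimal $\varepsilon^*=1/f^*-1$, whereas the paper simply rewrites \eqref{reduced least core} as \eqref{transformation} and observes it coincides with \eqref{dual}; the only small point you might state explicitly is that one may restrict to $1+\varepsilon>0$ when dividing (which is harmless since a feasible $\varepsilon=1/f^*-1>-1$ exists).
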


\begin{proof}
Let $x=(1+\varepsilon)z$ be a transformation, then (\ref{reduced least core}) can
be rewritten as
\begin{equation}\label{transformation}
\begin{array}{ll}
\mbox{max}& \varepsilon \\
\mbox{s.t.}& \left\{
\begin{array}{ll}
 z(E)={1}/{(1+\varepsilon)}\\
 z(P)\ge 1  & \quad \forall P\in {\cal P}\\
 z_i \ge 0 & \quad \forall e_i \in E
\end{array}\right.
\end{array}
\end{equation}
Combining the first constraint $z(E)={1}/{(1+\varepsilon)}$ and the objective
function $\mbox{min}\{1+\varepsilon\}$, it is easy to see that linear program
(\ref{transformation}) is the same as DLP(flow) (\ref{dual}). Since the optimal
value of (\ref{dual}) is also $f^*$, Theorem \ref{least core theorem} thus
follows.\qed

\end{proof}

Based on the relationship between a VPC-game and the corresponding flow game discussed
in Section \ref{PC-game and its core}, we can obtain a similar result on the least-core for VPC-games (The proof
is omitted).

\begin{Theorem}\label{least core theorem VPC}
Let $\Gamma_V=(E,\gamma_V)$ be a VPC-game defined on $D=(V,E;s,t)$, $f^*$ be the value
of the max-flow of $D$, then
$$x\in {\cal {LC}}(\Gamma_V)\mbox{ if and only if }x={z}/{f^*} \mbox{ for some } z\in
\mathbb{C}_V.$$
Here $\mathbb{C}_V$ is the convex hull of the indicator vectors of minimum vertex-cuts
in $D$.
\end{Theorem}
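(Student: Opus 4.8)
The plan is to mirror the proof of Theorem~\ref{least core theorem} for the edge case, replacing the max-flow LP/dual pair on paths-versus-edges with the analogous pair on paths-versus-internal-vertices. First I would write down the linear program defining ${\cal LC}(\Gamma_V)$: since a VPC-game is a simple game whose winning coalitions are exactly the vertex subsets $T$ (with $s,t$ adjoined) that contain an $(s,t)$-path, the least-core is
\begin{equation*}
\begin{array}{ll}
\mbox{max}& \varepsilon\\
\mbox{s.t.}&\left\{
\begin{array}{ll}
x(V\setminus\{s,t\})=1\\
x(P)\ge 1+\varepsilon & \forall P\in{\cal P}\\
x_v\ge 0 & \forall v\in V\setminus\{s,t\}
\end{array}\right.
\end{array}
\end{equation*}
where now $x(P)$ sums the payoffs of the \emph{internal} vertices of the path $P$. (One should note that it suffices to impose the constraint over paths rather than all winning coalitions, exactly as in the edge case, because $\gamma_V$ is monotone and $0$-$1$ valued, so the binding coalitions are the minimal winning ones, i.e.\ the vertex sets of $(s,t)$-paths.)

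Next I would perform the substitution $x=(1+\varepsilon)z$, which turns the program into the form
\begin{equation*}
\begin{array}{ll}
\mbox{min}& 1+\varepsilon\\
\mbox{s.t.}&\left\{
\begin{array}{ll}
z(V\setminus\{s,t\})=1/(1+\varepsilon)\\
z(P)\ge 1 & \forall P\in{\cal P}\\
z_v\ge 0 & \forall v\in V\setminus\{s,t\}
\end{array}\right.
\end{array}
\end{equation*}
and observe that this is precisely the LP dual of the ``vertex-disjoint path packing'' LP: maximize $\sum_j y_j$ subject to $\sum_{P_j:\,v\in P_j} y_j\le 1$ for every internal vertex $v$ and $y_j\ge 0$. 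By the vertex version of Menger's theorem (equivalently, max-flow/min-cut applied to the split network $D_V$ from Section~\ref{PC-game and its core}, where routing through vertex $v$ corresponds to using the unit-capacity arc $e_v$), the optimum of this packing LP equals $f^*$, the max-flow value of $D$, and the optimal solution set of its dual is exactly $\mathbb{C}_V$, the convex hull of the indicator vectors of minimum vertex-cuts of $D$. Therefore the optimal $\varepsilon^*$ satisfies $1+\varepsilon^*=f^*$, and $z$ is feasible-and-optimal for the dual iff $z\in\mathbb{C}_V$; undoing the substitution gives $x\in{\cal LC}(\Gamma_V)$ iff $x=z/f^*$ for some $z\in\mathbb{C}_V$, as claimed.

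The one place that requires genuine care — the step I'd flag as the main obstacle — is the correspondence between fractional vertex-disjoint path packings in $D$ and flows in the split network $D_V\langle E_V\rangle$, and correspondingly between minimum vertex-cuts of $D$ and minimum edge-cuts of $D_V$ consisting only of $E_V$-edges. Concretely one must check: (i) every minimum edge-cut of $D_V$ can be taken to lie entirely within $E_V$ (this is asserted in Section~\ref{PC-game and its core} and follows because replacing an original-type arc $(u'',v')$ in a cut by the $E_V$-arc $e_v=(v',v'')$ keeps it a cut without increasing its size), so that min-cuts of $D_V$ biject with min-vertex-cuts of $D$ and $f^*$ is the common optimum; and (ii) the dual LP~\eqref{dual} written for $D_V$ with variables only on $E_V$-arcs coincides, after this identification, with the vertex-constrained program above. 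Once this bookkeeping is in place, the argument is a verbatim transcription of the proof of Theorem~\ref{least core theorem}, invoking Theorem~\ref{least core theorem VPC}'s hypotheses and the max-flow/min-cut characterization of $\mathbb{C}_V$ already established for the flow game on $D_V\langle E_V\rangle$.
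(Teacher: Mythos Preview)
Your approach is precisely what the paper intends: it omits the proof of Theorem~\ref{least core theorem VPC} and says only that it follows from the argument of Theorem~\ref{least core theorem} together with the split-network correspondence of Section~\ref{PC-game and its core}, and you have filled in exactly those details, including the identification of the vertex-packing dual with the edge-cut LP on $D_V\langle E_V\rangle$. One small slip to fix: the optimum satisfies $1/(1+\varepsilon^*)=f^*$ (equivalently $1+\varepsilon^*=1/f^*$), not $1+\varepsilon^*=f^*$; with this correction the substitution $x=(1+\varepsilon^*)z=z/f^*$ agrees with your stated conclusion.
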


Theorem \ref{least core theorem} and \ref{least core theorem VPC} show that for the
unit flow network, the least-core of the PC-game is equivalent to
the core of the corresponding flow game in the sense of scaling down by ${1}/{f^*}$.
Hence, all the following problems for PC-games can be solved efficiently:
\begin{itemize}
  \item Computing the ${\cal {LC}}$-value;
  \item Finding an imputation in ${\cal {LC}}(\Gamma_E)$ and ${\cal {LC}}(\Gamma_V)$;
  \item Checking whether a given imputation is in ${\cal {LC}}(\Gamma_E)$ or
  ${\cal {LC}}(\Gamma_V)$.
\end{itemize}

{\it Remark}. Path cooperative games have close relationship with a non-cooperative
two-person zero-sum game, called \emph{path intercept game}~\cite{washburn1995two}.
In this model, an ``evader" attempts to select a path $P$ from the source to the
sink through a given network. At the same time, an ``interdictor" attempts to select
an edge $e$ in this network to detect the evader. If the evader traverses through
arc $e$, he is detected; otherwise, he goes undetected. The interdictor aims to
find a probabilistic ``edge-inspection" strategy to maximize the average probability
of detecting the evader. While for the evader, he wants to find a "path-selection
strategy" to minimize the interdiction probability. Aziz \emph{et al.}\cite{Aziz:2011}
observed that the mixed Nash Equilibrium of path intercept games is the same as the
least-core of EPC-games. With max-min theorem in  matrix game theory, the same result
can be obtained based on the similar analysis as in the proof of Theorem \ref{least core theorem}.

\section{Nucleolus of PC-games}\label{Nucleolus of PC-game}

In this section, we aim at showing that the nucleolus of PC-games can be
computed in polynomial time. Given a game $\Gamma=(N,\gamma)$, Kopelowitz \cite{Kope:1967}
showed that the nucleolus $\eta(\Gamma)$ can be obtained by recursively solving
the following standard sequence of linear programs $SLP(\eta(\Gamma))$:
$$
\begin{array}{c}
  LP_k \\
  (k=1,2,\cdots)
\end{array}:
\begin{array}{ll}
  \max &  \varepsilon  \\
 \mbox{s.t.} & \ \left\{ \begin{array}{ll}
 x(S)=\gamma(S)+ \varepsilon_r, & \  \forall S\in {\cal J}_r \quad r=0,1,\cdots, k-1 \\
 x(S)\geq \gamma(S)+\varepsilon_r,  & \  \forall \emptyset\neq S\subset N \setminus  \cup_{r=0}^{k-1}{\cal J}_r \\
 x\in {\cal I}(\Gamma).
\end{array} \right.
\end{array}$$
Initially, set ${\cal J}_0=\{\emptyset, N\}$ and $\varepsilon_0=0$.
The number $\varepsilon_r$ is the optimal value of the $r$-th program $LP_r$,
and $J_r=\{S\subseteq N: x(S)=\gamma(S)+\varepsilon_r,\forall x\in X_r\}$,
where $X_r=\{x\in R^n:(x,\varepsilon_r)$ is an optimal solution of $LP_r\}$.

As in the last section, we first discuss the nucleolus of EPC-games. Let $\Gamma_E$
be the  EPC-game associated with network $D=(V,E;s,t)$ with $|E|=n$, ${\cal P}$ be
the set of all $(s,t)$-paths and $f^*$ be the value of the max-flow of $D$. Denote
${\cal E}_{\Gamma}$ be the set of coalitions consisting of one-edge coalitions and
path coalitions, {\it i.e.},
$${\cal E}_{\Gamma}=\{\{e\}:e\in E\}\cup \{P\subseteq E: P \in {\cal P}
\ \mbox{is an} \ (s,t)\mbox{-path}\}.$$

We show that the sequential linear programs $SLP(\eta(\Gamma_E))$ of EPC-game $\Gamma_E$
can be simplified as follows.
\begin{equation}\label{LP_k':transformation}
LP_k':~
\begin{array}{ll}
\mbox{max}& \quad \varepsilon\\
\vspace{1mm} \mbox{s.t.}&\left\{
\begin{array}{ll}
 x(e)=\varepsilon_r, & \forall e\in E_r,r=0,1,...,k-1\\
 x(e) \geq \varepsilon, & \forall e\in {E\backslash \bigcup_{r=0}^{k-1}E_r}\\
 x(P)={1}/{f^*} + \varepsilon_r,  &  \forall P\in {\cal P}_r,r=0,1,...,k-1\\
 x(P)\geq {1}/{f^*} + \varepsilon, &  \forall P\in {\cal P}\backslash
 \bigcup_{r=0}^{k-1}{{\cal P}_r}\\
 x(e)\geq 0, & \forall e\in E\\
 x(E)=1.
\end{array}\right.
\end{array}
\end{equation}
where $\varepsilon_r$ is the optimum value of $LP_r$,
$X_r=\{x\in R^n:(x,\varepsilon_r)$ is an optimal solution of $LP_r\}$,
${\cal P}_r=\{P\in {\cal P}: x(P)=1+\varepsilon_r,\forall x\in X_r\}$
and $E_r=\{e\in E: x(e)=\varepsilon_r, \forall x\in X_r \}$.
Initially, $\varepsilon_0=0$, ${\cal P}_0=\emptyset$ and $E_0=\emptyset$.

\begin{Proposition}\label{transformation proposition}
The nucleolus $\eta(\Gamma_E)$ of EPC-game $\Gamma_E$ defined on the network
$D=(V,E;s,t)$ can be obtained  by computing the linear programs $LP_k'$ in
\emph{(\ref{LP_k':transformation})}.
\end{Proposition}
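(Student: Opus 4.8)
The plan is to show that the general sequential LPs $SLP(\eta(\Gamma_E))$ of Kopelowitz collapse, step by step, to the simplified programs $LP_k'$ in (\ref{LP_k':transformation}). The key reduction is that among all nonempty coalitions $S\subsetneq N=E$, only the coalitions in ${\cal E}_\Gamma$ (single edges and $(s,t)$-paths) can ever be \emph{binding} in the nucleolus computation: every other coalition is slack at every stage. First I would recall that $\Gamma_E$ is a simple game, so $\gamma_E(S)\in\{0,1\}$; hence for a coalition $S$ with $\gamma_E(S)=0$ the excess is just $x(S)$, and for $S$ with $\gamma_E(S)=1$ it is $x(S)-1$. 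Since any winning coalition $S$ contains some $(s,t)$-path $P$ with $\gamma_E(P)=1$ and $x(S)\ge x(P)$ (as $x\ge 0$), the excess of $S$ dominates that of $P$; and any losing coalition $S$ with $|S|\ge 2$ contains a single edge $\{e\}$ with $x(S)\ge x_e=x(\{e\})$ and $\gamma_E(\{e\})=0=\gamma_E(S)$, so again $S$ is dominated. Thus in each $LP_k$ the constraint for a dominated $S$ is implied by the constraint for the dominating member of ${\cal E}_\Gamma$, and can be dropped without changing the optimal value $\varepsilon_k$ or the feasible set $X_k$. This justifies restricting attention to ${\cal E}_\Gamma$, which is exactly what $LP_k'$ does.

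Next I would reconcile the two right-hand sides. In the generic $SLP$ the path constraints read $x(P)\ge \gamma_E(P)+\varepsilon = 1+\varepsilon$, whereas in $LP_k'$ they read $x(P)\ge 1/f^* + \varepsilon$; the single-edge constraints read $x_e\ge \gamma_E(\{e\})+\varepsilon = \varepsilon$ in both. The discrepancy is precisely the scaling by $1/f^*$ already identified in Theorem \ref{least core theorem}: the first program $LP_1$ of $SLP$ is exactly the least-core LP (\ref{reduced least core}), whose optimal value is $\varepsilon^* = 1/f^* - 1$ and whose solution set is $\mathbb{C}_E/f^*$. I would substitute $\varepsilon_r \mapsto \varepsilon_r + (1/f^* - 1)$ (equivalently, carry out the affine reparametrisation of the $\varepsilon$-axis) to see that the two sequences of programs have identical optimal solution sets $X_r$ and index sets $E_r, {\cal P}_r$ at every stage; the only change is a uniform shift of all the critical values, which does not affect the lexicographic optimiser. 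Care is needed with the bookkeeping of ${\cal J}_r$ versus the pair $(E_r,{\cal P}_r)$: I must check that a coalition $S\notin{\cal E}_\Gamma$ never enters ${\cal J}_r$ — this follows from the strict-domination argument above, since whenever the dominating member is only weakly tight, $S$ is strictly slack, and a standard averaging argument (take two optimal solutions, one making the dominating member's inequality strict, which exists because single edges and paths cannot all be simultaneously forced) shows $S\notin{\cal J}_r$.

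Finally I would verify that the simplified recursion genuinely terminates at $\eta(\Gamma_E)$: since the imputation set ${\cal I}(\Gamma_E)=\{x\ge 0: x(E)=1\}$ is full-dimensional modulo the single equality, the Kopelowitz scheme applied to $SLP$ converges to the unique nucleolus (Schmeidler), and the reduction above shows $LP_k'$ computes the same thing. The main obstacle, and the step I would spend the most care on, is the domination/averaging argument that no coalition outside ${\cal E}_\Gamma$ ever becomes fixed in ${\cal J}_r$: weak domination ($x(S)\ge x(P)$ with equality possible) is not by itself enough to drop a constraint from the \emph{tight} set, so one must exhibit, at each stage $r$, an optimal solution of $LP_r$ for which every dominated coalition is strictly slack — this requires knowing that the remaining free edges can always be perturbed, which in turn uses the structure of paths and cuts in $D$ (e.g. that the not-yet-fixed edges still admit slack in all surviving path constraints). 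I expect this to be the technical heart of the argument; the scaling reconciliation and the simple-game excess comparison are routine once it is in place.
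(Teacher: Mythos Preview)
Your reparametrisation step does not work. Under the affine shift $\varepsilon' = \varepsilon + (1-1/f^*)$ the path constraint $x(P)\ge 1+\varepsilon$ indeed becomes $x(P)\ge 1/f^*+\varepsilon'$, but the edge constraint $x(e)\ge\varepsilon$ becomes $x(e)\ge\varepsilon'-(1-1/f^*)$, \emph{not} $x(e)\ge\varepsilon'$. Hence the shifted restricted SLP is not $LP_k'$; the edge constraints in $LP_k'$ are strictly stronger, and edges can enter $E_r$ in $LP_k'$ long before they could in the genuine SLP. Concretely, any edge $e$ lying in no minimum cut has $x(e)=0$ throughout the least-core, so $e\in E_1$ in $LP_1'$ (where $\varepsilon'_1=0$), whereas in the SLP at $\varepsilon_1=1/f^*-1<0$ no edge can satisfy $x(e)=\varepsilon_1$. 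Thus your claim that ``the two sequences of programs have identical optimal solution sets $X_r$ and index sets $E_r,{\cal P}_r$ at every stage'' is false already at $r=1$. The paper does \emph{not} attempt a uniform shift; instead it argues by separating the SLP into a phase with $\varepsilon_r<0$ (where the edge constraints $x(e)\ge\varepsilon$ are vacuous because $x\ge0$) and a phase with $\varepsilon_r\ge 0$ (where it claims all paths are already fixed), so that path and edge constraints never interact in the original SLP.

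Your domination step is also more fragile than you indicate. Once a path $P$ has been fixed at some level $\varepsilon_{r(P)}<0$, the constraint coming from the non-essential winning coalition $S=P\cup\{e'\}$ in a later $LP_k$ reads $x(e')\ge \varepsilon-\varepsilon_{r(P)}$, which is \emph{strictly stronger} than the essential edge constraint $x(e')\ge\varepsilon$. So dropping $S$ can change the optimal value $\varepsilon_k$, not merely the tight set ${\cal J}_k$; the ``averaging'' repair you sketch addresses only the latter. For instance, on the network with $e_1=(s,a)$, $e_2=(a,t)$, $e_3=(s,b)$, $e_4=(b,t)$, $e_5=(a,b)$, restricting $LP_2$ to ${\cal E}_\Gamma$ yields $\varepsilon_2=0$ and the point $(1/2,0,0,1/2,0)$, whereas the full $LP_2$ (which must include $P_1\cup\{e_3\}$, $P_2\cup\{e_2\}$, etc.) yields $\varepsilon_2=-1/3$ and the correct nucleolus $(1/3,1/6,1/6,1/3,0)$. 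The paper's proof glosses over this same point; a clean route is to bypass the restricted SLP entirely, identify $LP_k'$ with $(1/f^*)\cdot\widetilde{LP}_k$, and rely on the flow-game result of Deng et al.\ together with a direct argument that $\eta(\Gamma_E)=\eta(\Gamma_f)/f^*$.
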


\begin{proof}
Firstly, we show that in sequential linear programs $SLP(\eta(\Gamma))$,
only the constrains corresponding to the the coalitions in ${\cal E}_{\Gamma}$
({\it i.e.}, the one-edge coalitions and path coalitions) are necessary in
determining the nucleolus $\eta(\Gamma_E)$.

In fact, for any winning coalition $S\subseteq N$ (not a path), $S$ can be
decomposed into a path $P$ and some edges $E'=S\backslash E(P)$. Then,
$$x(S)-\gamma(S)=x(P)-1+\sum_{e\in E'}{x(e)}\geq x(P)-1.$$
Since $x(e)\geq 0$ for all $e\in E'$, $S$ cannot be fixed before $P$ or any
$e\in E'$. After $P$ and all $e\in E'$ are fixed, $S$ is also fixed, {\it i.e.},
$S$ is redundant. If $S$ is a losing coalition, then $S$ is a set of edges with
$\gamma(S)=0$ and $x(S)-\gamma(S)=\sum_{e\in S}{x(e)}\geq x(e), \forall e\in S$.
That is to say, $S$ cannot be fixed before any $e\in S$.
When all edges in  $S$ are fixed, $S$ is fixed accordingly, \emph{i.e.} $S$ is
also redundant in this case. Therefore, deleting all the constrains corresponding
to the coalitions not in ${\cal E}_{\Gamma}$  will not change the result of
$SLP(\eta(\Gamma))$.

The key point in remainder of the proof is the correctness of the third and the
forth constrains in (\ref{LP_k':transformation}), where we replace the original
constrains $x(P)=1+\varepsilon_r$ and $x(P)\geq 1 + \varepsilon$ in $SLP(\eta(\Gamma))$
with new constrains $x(P)={1}/{f^*} +\varepsilon_r$ and $x(P)\geq {1}/{f^*}+\varepsilon$,
respectively.

In the process of solving the sequential linear programs,
the optimal values increase with $k$. Since ${\cal C}(\Gamma_E)=\emptyset$,
we know $\varepsilon_1<0$. Note that we can always find an optimal solution
such that $\varepsilon_1>-1$ (for example $x(e)=\frac{1}{n},\forall e\in E$
is a feasible solution of the linear programming of ${\cal LC}(\Gamma_E)$).

We can divide the process into two stages. The first stage is the programs with
$-1<\varepsilon_r < 0$. In this case, the constraints $x(e)\geq \varepsilon,
\forall e\in E$ cannot effect the optimal solutions of the current programs,
because $x(e)\geq 0$. Ignoring the invalid constraints we can get
(\ref{LP_k':transformation}) directly.

The second stage is the programs with $\varepsilon_r \geq 0$. When the programs
arrive at this stage, we can claim that all paths have been fixed. Otherwise, if
there is a path satisfying $x(p)=1+\varepsilon_r\geq 1$, then we have $x(p)=1$
(note $x(E)=1$), contradicting with the precondition the value of maximum flow
$f^*\geq 2$. Then we can omit the path constraints in this stage and then this
implies (\ref{LP_k':transformation}).

This completes the proof of Proposition \ref{transformation proposition}.
\qed
\end{proof}

In the following, by making us the known results on the nucleolus of flow
games, we shall show that the nucleolus of PC-games can be solved
in polynomial time.  Let $\Gamma_f=(E,\gamma)$ be the flow game
defined on the unit flow network $D=(V,E;s,t)$. It is easy to
show that the sequential linear programs $LP(\eta(\Gamma_f))$ can be simplified
as $\widetilde{LP}_k,(k=1,2,...)$ :
\begin{equation}
\widetilde{LP}_k:~
\begin{array}{ll}
\mbox{max}& \quad \varepsilon\\
\vspace{1mm} \mbox{s.t.}&\left\{
\begin{array}{ll}
 x(e)=\varepsilon_r & \quad \forall e\in E_r,r=0,1,...,k-1\\
 x(P)=1 + \varepsilon_r  & \quad \forall P\in {\cal P}_r,r=0,1,...,k-1\\
 x(e) \geq \varepsilon & \quad \forall e\in {E\backslash \bigcup_{r=0}^{k-1}E_r}\\
 x(P)\geq 1 + \varepsilon & \quad \forall P\in {\cal P}\backslash
 \bigcup_{r=0}^{k-1}{{\cal P}_r}\\
 x(E)=f^*,\\
\end{array}\right.
\end{array}
\end{equation}
where $\varepsilon_r$ is the optimum value of $\widetilde{LP}_r$,
$X_r=\{x\in R^n:(x,\varepsilon_r)$ is an optimal solution of $\widetilde{LP}_r\}$,
${\cal P}_r=\{P\in {\cal P}: x(P)=1+\varepsilon_r,\forall x\in X_r\}$ and
$E_r=\{e\in E: x(e)=\varepsilon_r, \forall x\in X_r \}$. Initially,
$\varepsilon_0=0$, ${\cal P}_0=\emptyset$ and $E_0=\emptyset$.

Deng \emph{et al.} \cite{Deng:2009} proved that the sequential linear programs
$\widetilde{LP}_k,(k=1,2...)$ can be transformed to another sequential linear programs
with only polynomial number of constrains, and it follows that the nucleolus of flow
game $\eta(\Gamma_f)$ can be found efficiently. Futhermore, Potters {\it et al.} \cite{potters2006nucleolus},      
show that the nucleolus of flow games with public edge can also be found in polynomial
time when the core is nonempty. Based on these known results, we discuss the   
algorithmic problem on the nucleolus of PC-games in the following theorems.

\begin{Theorem}\label{nucleolus of EPC}
Let $\Gamma_E$ and $\Gamma_f$ be the EPC-game and flow game defined on a unit flow
network $D=(V,E;s,t)$, respectively. The nucleolus of $\Gamma_E$ can be computed
in polynomial time. Furthermore,
$$x\in\eta(\Gamma_E)\ \mbox{\it if and only  if} \ z = x\cdot{f^*}\in \eta(\Gamma_f),$$
where $f^*$ is the value of the max-flow of $D$.
\end{Theorem}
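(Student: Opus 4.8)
The plan is to exploit the structural parallel between the two sequential linear programs already exhibited in the paper: $LP_k'$ in (\ref{LP_k':transformation}) for $\Gamma_E$ and $\widetilde{LP}_k$ for $\Gamma_f$. By Proposition \ref{transformation proposition}, the nucleolus $\eta(\Gamma_E)$ is computed by the family $LP_k'$, and by the cited results of Deng \emph{et al.} and Potters \emph{et al.}, $\eta(\Gamma_f)$ is computed by $\widetilde{LP}_k$ and moreover can be found in polynomial time. So the first step is to observe that the substitution $z = x\cdot f^*$ maps the feasible region (and the optimal face at each stage) of $LP_k'$ onto that of $\widetilde{LP}_k$, stage by stage. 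Concretely, the constraint $x(E)=1$ becomes $z(E)=f^*$; the path constraints $x(P)\ge 1/f^* + \varepsilon$ become $z(P)\ge 1 + f^*\varepsilon$; the single-edge constraints $x(e)\ge \varepsilon$ become $z(e)\ge f^*\varepsilon$; and $x(e)\ge 0$ becomes $z(e)\ge 0$. Writing $\tilde\varepsilon = f^*\varepsilon$, these are exactly the constraints of $\widetilde{LP}_k$ with parameter $\tilde\varepsilon$.

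The second step is to handle the fact that the excess ordering, not just the feasible region, must be preserved. Since $f^* > 0$ is a fixed positive constant, scaling all payoffs by $f^*$ scales every excess $e(x,S)$ by $f^*$ (after the appropriate constant shift coming from the $1/f^*$ versus $1$ discrepancy in the path constraints, which is uniform across all path coalitions and zero across all one-edge coalitions). Hence the lexicographic order on excess vectors is preserved under $x\mapsto f^*x$: maximizing $\varepsilon$ in $LP_k'$ is equivalent to maximizing $\tilde\varepsilon = f^*\varepsilon$ in $\widetilde{LP}_k$, and the set of coalitions that get fixed at stage $r$ (the sets ${\cal P}_r$ and $E_r$) is identical in the two sequences. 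By induction on $k$, starting from the common initialization $\varepsilon_0 = 0$, ${\cal P}_0 = E_0 = \emptyset$, the optimal faces $X_r$ of $LP_k'$ and of $\widetilde{LP}_k$ correspond under the scaling, and therefore the unique nucleolus points satisfy $z = x\cdot f^*$. This gives the stated ``if and only if''.

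The third step is the complexity claim. Because $\eta(\Gamma_f)$ is computable in polynomial time (Deng \emph{et al.} show $\widetilde{LP}_k$ reduces to a polynomially-sized sequence of linear programs), and because $f^*$ is computable in polynomial time by a single max-flow computation, we obtain $\eta(\Gamma_E) = \eta(\Gamma_f)/f^*$ in polynomial time. Alternatively, one can argue directly that $LP_k'$ itself admits a polynomial separation oracle — the path constraints are separated by a shortest-path computation with edge lengths $x(e)$, exactly as in the least-core discussion after (\ref{reduced least core}) — so the ellipsoid method solves each $LP_k'$ in polynomial time; but routing through the known flow-game result is cleaner and is what the theorem statement invites.

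The main obstacle I anticipate is the bookkeeping around the two ``stages'' already flagged in the proof of Proposition \ref{transformation proposition}: for $\varepsilon_r \ge 0$ the path constraints have become inactive (all paths fixed, since $f^* \ge 2$ forces $x(P) = 1$ would contradict $x(E)=1$ otherwise), so one must check that the correspondence with $\widetilde{LP}_k$ still goes through when some constraint families have effectively dropped out — but since the same phenomenon occurs symmetrically in $\widetilde{LP}_k$ (paths fixed at $x(P) = 1$ there), the stage boundaries align and the induction is unaffected. A second minor point to get right is that the shift constant $1/f^*$ in the EPC path constraints versus $1$ in the flow-game path constraints is consistent with $z = f^* x$: indeed $z(P) \ge 1 + f^*\varepsilon$ is exactly $f^*$ times $x(P) \ge 1/f^* + \varepsilon$, so no additional additive correction is needed and the excess vectors line up termwise after multiplication by $f^*$.
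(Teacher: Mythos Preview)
Your proposal is correct and follows essentially the same route as the paper: both set up the bijection $(x,\varepsilon)\leftrightarrow(z,\tilde\varepsilon)=(f^*x,\,f^*\varepsilon)$ between $LP_k'$ and $\widetilde{LP}_k$, verify by induction on $k$ that the optimal faces and the fixed sets $E_r,\ {\cal P}_r$ coincide stage by stage, and then invoke the known polynomial-time algorithm for the flow-game nucleolus. Your treatment is in fact slightly more careful than the paper's (which omits the inductive details for $k\ge 2$); the one small inaccuracy---that $\widetilde{LP}_k$ does not literally carry the constraint $z(e)\ge 0$---is harmless, since $\tilde\varepsilon_1=0$ for the flow game makes that constraint redundant throughout.
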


\begin{proof}
Notice that the dimension of the feasible regions of $LP_k' (k=1,2...)$ \label{LP_k':transformation}
decreases in each step, so we can end up the process within at most $|N|$ steps.

The key point here is to show that there is a one-to-one correspondence between the
optimal solutions of $\widetilde{LP}_k$ (6) and that of $LP_k'$ (5) ($\forall k=1,2,\cdots$).

We first prove that if $({z}^*, \tilde{\varepsilon}^*)$ is an optimal solution of
$ \widetilde{LP}_k$ (6), the  $(x^*,\varepsilon^*)= ({z}^*/f^*, \tilde{\varepsilon}^*/f^*)$
is an optimal solution of  $LP_k'$ (5).

When $k=1$, we have  $E_0=\emptyset$, ${\cal P}_0=\emptyset$ in $LP_1'$. And it is
easy to check the feasibility and the optimality of $(z^*,\varepsilon^*)$ in $LP_1'$. To
continue the proof recursively, we need to explain $E_1=\tilde{E}_1$ and
${\cal P}_1= \tilde{\cal P}_1$, {\it i.e.}, the constrains which become tight in every iteration
are exactly the same in the two linear programs. For each $e\in E$,
if ${z}^*(e)=\tilde{\varepsilon}^*$, then
${{x}^*(e)}={{z}^{*}}(e)/{{{f}^{*}}}={{{\tilde{\varepsilon} }^{*}}}/{{{f}^{*}}}=\varepsilon^* $.
And if $z^*(e)>\tilde{\varepsilon}^*$, then we have $x^*(e)>\varepsilon^*$. Thus, $E_1=\tilde{E}_1$.
${\cal P}_1= \tilde{\cal P}_1$ can be shown analogously.
The other direction of the result can be shown similarly. That is, the conclusion holds for $k=1$.

For the rest iteration $k=2,3,\cdots$, the proof can be carried out in a same way. Here we omit
the detail of the proof. Since the nucleolus of flow game can be found in polynomial time, it
follows that the nucleolus of EPC-game is also efficiently solvable.
\qed
\end{proof}

As for the nucleolus of VPC-games, we also show that it is polynomially solvable
based on the relationship between a VPC-game and the corresponding flow game
demonstrated in Section 3. Due to the space limitation, the proof of the following theroem is omitted.


\begin{Theorem}\label{nucleolus of VPC}
The nucleolus of VPC-games can be solved in polynomial time.
\end{Theorem}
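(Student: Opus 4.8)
The plan is to mirror the proof of Theorem~\ref{nucleolus of EPC}, replacing the flow game on $D\langle E\rangle$ by the flow game on the auxiliary network $D_V\langle E_V\rangle$ built in Section~\ref{PC-game and its core}. Recall from there that in $D_V$ the players are exactly the split-edges $E_V$, all remaining (original-type) edges are public, and $\Gamma_f\langle E_V\rangle$ is the superadditive cover of $\Gamma_V$; moreover $D_V$ always has a minimum edge-cut lying entirely inside $E_V$, so by Proposition~\ref{cs-core of VPC} and Lemma~\ref{CS-core} the core of $\Gamma_f\langle E_V\rangle$ is nonempty. Consequently the result of Potters \emph{et al.}~\cite{potters2006nucleolus} applies and $\eta(\Gamma_f\langle E_V\rangle)$ is computable in polynomial time. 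Writing $f^*$ for the size of a minimum vertex-cut of $D$ (equivalently, the max-flow value of $D_V$; note $f^*\ge 2$ under our standing assumption that $\mathcal{C}(\Gamma_V)=\emptyset$), the target is to show that $\eta(\Gamma_V)$ is $\eta(\Gamma_f\langle E_V\rangle)$ scaled down by $1/f^*$.

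The first step is the VPC analogue of Proposition~\ref{transformation proposition}. For an $(s,t)$-path $P$, let $\hat P=V(P)\setminus\{s,t\}$ be its set of internal (player) vertices. One argues that in the Kopelowitz sequence $SLP(\eta(\Gamma_V))$ only the constraints coming from single-vertex coalitions $\{v\}$ and from the sets $\hat P$ are relevant, and among the latter only the inclusion-minimal ones: a winning vertex coalition $T$ properly containing some $\hat P$ satisfies $x(T)-\gamma_V(T)=x(\hat P)-1+x(T\setminus\hat P)\ge x(\hat P)-1$ with $x(T\setminus\hat P)\ge 0$, so it is fixed once $P$ and the remaining vertices are, hence redundant; a losing coalition is a union of singletons and is likewise redundant. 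Then the same two-stage analysis as in the proof of Proposition~\ref{transformation proposition} applies: while the current optimal value $\varepsilon_r$ lies in $(-1,0)$ the bounds $x(v)\ge\varepsilon$ are dominated by $x(v)\ge 0$, and once $\varepsilon_r\ge 0$ every surviving path constraint is already tight — otherwise an unfixed path would have $x(\hat P)=1=x(V\setminus\{s,t\})$, incompatible with $f^*\ge 2$. This yields, for $\eta(\Gamma_V)$, a simplified sequence of linear programs of exactly the shape of (\ref{LP_k':transformation}), with $x(v)$ in place of $x(e)$, the same path-threshold $1/f^*$, and $x(V\setminus\{s,t\})=1$.

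The second step sets up the bijection $z\leftrightarrow x=z/f^*$ between optimal solutions of this simplified VPC sequence and those of the simplified flow-game sequence $\widetilde{LP}_k$ written for $D_V\langle E_V\rangle$ (with $x(E_V)=f^*$ and path-threshold $1$). Exactly as in the proof of Theorem~\ref{nucleolus of EPC}, one verifies by induction on $k$ that the map carries feasible solutions to feasible solutions, preserves optimality, satisfies $\varepsilon^*=\tilde\varepsilon^*/f^*$ at each round, and — the point requiring care — that the index sets of newly-tight constraints coincide: $z^*(v)=\tilde\varepsilon^*$ iff $x^*(v)=\varepsilon^*$, and $z^*(\hat P)=1+\tilde\varepsilon^*$ iff $x^*(\hat P)=1/f^*+\varepsilon^*$, so that $E_k=\tilde E_k$ and $\mathcal{P}_k=\tilde{\mathcal P}_k$. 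Since the feasible regions have strictly decreasing dimension, the process halts after at most $|V\setminus\{s,t\}|$ rounds; combining this with the polynomial-time computability of $\eta(\Gamma_f\langle E_V\rangle)$ gives the theorem, together with the byproduct that $x\in\eta(\Gamma_V)$ iff $x\cdot f^*\in\eta(\Gamma_f\langle E_V\rangle)$.

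The main obstacle I anticipate is matching up the \emph{coalitions} on the two sides in the presence of public edges in $D_V$, which is the one genuine difference from the public-edge-free EPC case. One must check that an inclusion-minimal winning coalition $S\subseteq E_V$ of the flow game is precisely the set of split-edges $e_v$ along a simple $(s,t)$-path of $D$ — so that $\gamma_f(S)=1$ for it, matching $\gamma_V(\hat P)=1$ on the VPC side — and that a public edge never contributes a tight constraint in the flow-game sequence without a counterpart in the VPC sequence. Both follow from minimality (if $S$ admitted two edge-disjoint $s$--$t$ routings in $D_V$, using disjoint sets of $E_V$-edges, it would not be minimal; and a public edge that is a dead end in the relevant induced subnetwork can be dropped), but pinning this down carefully is where the argument departs from Theorem~\ref{nucleolus of EPC}; once it is done, the scaling and the induction are routine bookkeeping.
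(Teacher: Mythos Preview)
Your proposal is correct and follows exactly the approach the paper indicates (the paper omits the detailed proof of Theorem~\ref{nucleolus of VPC} ``due to the space limitation'' but states it is ``based on the relationship between a VPC-game and the corresponding flow game demonstrated in Section~3'', i.e., the auxiliary network $D_V\langle E_V\rangle$). Your reduction to $\Gamma_f\langle E_V\rangle$, the VPC analogue of Proposition~\ref{transformation proposition}, and the scaling bijection $x\leftrightarrow z/f^*$ mirroring Theorem~\ref{nucleolus of EPC} are precisely what the paper has in mind; the one extra care you flag --- that in $D_V$ the inclusion-minimal winning $E_V$-coalitions biject with the vertex sets $\hat P$ of $(s,t)$-paths in $D$ --- is the right point to check and is immediate from the construction of $D_V$.
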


\subsubsection{PC-games on undirected networks.}

Given an undirected network $D=(V,E;s,t)$, we construct a directed network
$\overrightarrow{D}=(V,\overrightarrow{E};s,t)$ derived from $D$ as follows (see the following figure):
\begin{enumerate}
\item For edge $e\in E$ with end vertices $v_1$ and $v_2$,
transform it into two directed edges $\overrightarrow{e}_{v_1}=(v_{11},v_{12})$
and $\overrightarrow{e}_{v_2}=(v_{21},v_{22})$;
\item Connect the two directed edges into a directed cycle via two supplemental
directed edges $\overrightarrow{e}_1$ and $\overrightarrow{e}_2$.
\end{enumerate}
\vspace{-0.5cm}
\begin{figure}[htbp]
  \centering
  \includegraphics[width=0.50\textwidth]{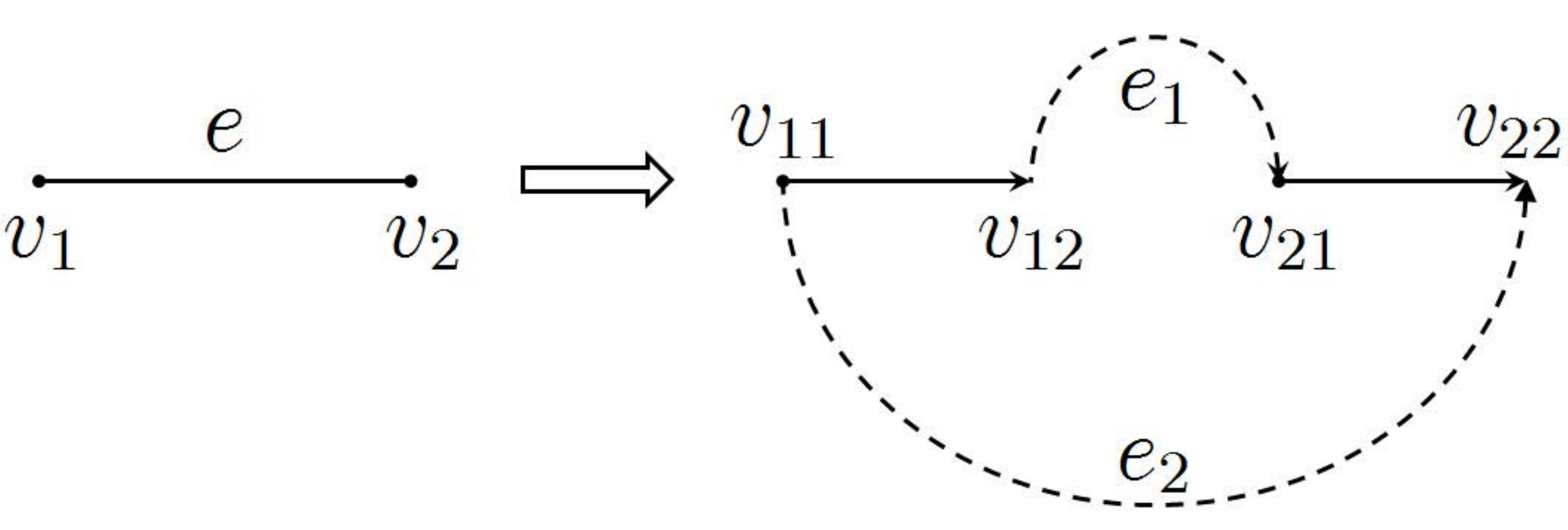}
\end{figure}
\vspace{-0.5cm}
Thus, the EPC-game defined on undirected network $D=(V,E;s,t)$ is transformed to
an  EPC-game defined on the constructed directed network $\overrightarrow{D}=(V,\overrightarrow{E};s,t)$.
Furthermore, it is easy to check that there exists one-to-one correspondence  for the
game solution (such as, the core, the least-core and the nucleolus) between the two games.
As for a VPC-game defined on an undirected network, we  first transform it into
 EPC-game  on an undirected network as demonstrated in Section 3, and then transform
it to EPC-game on a directed network in the same way as above. Henceforth, the
algorithmic results for PC-games can be generalized from directed networks to undirected networks.

\begin{Theorem}\label{nucleolus of undirected}
Computing the least-core and the nucleolus can be done in polynomial time for both
EPC-games and VPC-games defined on undirected networks.
\end{Theorem}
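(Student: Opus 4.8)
The plan is to reduce every undirected instance, in polynomial time, to a \emph{directed} instance so that the two cooperative games are carried onto one another by an explicit solution-preserving bijection; the statement then follows from Theorems~\ref{least core theorem}, \ref{least core theorem VPC}, \ref{nucleolus of EPC}, \ref{nucleolus of VPC}. For an EPC-game on an undirected network $D=(V,E;s,t)$ I would use exactly the edge gadget pictured above: replace each undirected edge $e=\{v_1,v_2\}$ by the directed $4$-cycle through the player-arcs $\overrightarrow{e}_{v_1}=(v_{11},v_{12})$, $\overrightarrow{e}_{v_2}=(v_{21},v_{22})$ and the supplemental arcs $\overrightarrow{e}_1,\overrightarrow{e}_2$, wiring $v_{11},v_{22}$ to the $v_1$-side and $v_{12},v_{21}$ to the $v_2$-side and orienting the cycle so that a unit can traverse the gadget from the $v_1$-side to the $v_2$-side via $\overrightarrow{e}_{v_1}$ and the other way via $\overrightarrow{e}_{v_2}$; this yields $\overrightarrow{D}=(V,\overrightarrow{E};s,t)$, with the supplemental arcs declared public and the player who controlled $e$ identified with the bundle $\{\overrightarrow{e}_{v_1},\overrightarrow{e}_{v_2}\}$. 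The crucial step is a path-correspondence lemma: $P\subseteq E$ is an $(s,t)$-path in $D$ if and only if $\overrightarrow{D}$ has an $(s,t)$-path that, for every $e\in E$, enters the $e$-gadget at most once and then uses exactly one of $\overrightarrow{e}_{v_1},\overrightarrow{e}_{v_2}$ (the copy matching the direction in which $P$ crosses $e$). Here the orientation of the cycle together with the placement of the two attachment points is what forces a simple $(s,t)$-path that enters the $e$-gadget from one side to leave at the other side using a single copy, since ``going all the way around'' dead-ends at an already visited vertex. Hence a coalition $S\subseteq E$ wins in the undirected game iff the corresponding arc set, together with the public arcs, wins in $\overrightarrow{D}$, so the two characteristic functions agree under the identification of players. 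Finally, because the two copies of each $e$ are interchangeable and the supplemental arcs are public, every core/least-core/nucleolus payoff of the directed game is invariant under swapping the two copies of each $e$ and hence descends to a payoff vector indexed by $E$, while conversely any undirected payoff lifts by splitting $x_e$ equally between $\overrightarrow{e}_{v_1}$ and $\overrightarrow{e}_{v_2}$; this is the claimed bijection on solution sets, in particular equality of ${\cal LC}$-values and of nucleoli.

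For a VPC-game on an undirected network I would compose two reductions. First apply the vertex-splitting construction of Section~\ref{PC-game and its core}: split each $v\in V\setminus\{s,t\}$ into $v',v''$ joined by an edge $e_v$, kept \emph{undirected} because in the undirected setting a path may cross $v$ in either sense, declare the arcs coming from original edges of $E$ public, and take the arcs $e_v$ as players; as in Section~\ref{PC-game and its core} this step preserves the game up to isomorphism and produces an (undirected) path game with public arcs. Then apply the edge gadget of the previous paragraph to obtain a directed path game with public arcs on a network $\overrightarrow{D_V}$, and conclude by Theorems~\ref{least core theorem} and \ref{nucleolus of EPC} (the undirected EPC case being just the first paragraph). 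Both gadgets enlarge the network only by a constant factor per vertex/edge, so the whole reduction is polynomial.

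The point that needs genuine care --- and the reason ``it is easy to check'' above is not quite free --- is the player-set mismatch introduced by the gadget: one undirected player becomes a two-arc bundle plus public arcs, so the directed object is not literally an EPC-game in the sense of the Definition but a \emph{path game with public arcs}, that is, the threshold version of a flow game with public arcs. Two things then have to be pinned down. First, that the machinery behind Theorems~\ref{least core theorem}--\ref{nucleolus of VPC} still applies to this object: for the least-core the argument of Theorem~\ref{least core theorem} goes through verbatim once one notes that in the least-core LP a path contributes only the $x$-weight of its player-arcs, so after the substitution $x=(1+\varepsilon)z$ one recovers precisely the dual of the max-flow program on $\overrightarrow{D}$ with public arcs, whose optimum is $f^*$ and whose optimal set is the core of the corresponding flow game; for the nucleolus one checks that in the sequential programs $LP_k'$ the two copies of each $e$ always enter and leave the active-constraint set together, so the symmetric redundancy can be factored out and one is left with a polynomial-size program, and here the known polynomial-time algorithms for the nucleolus of flow games with public arcs used in Section~\ref{Nucleolus of PC-game} (Potters \emph{et al.}, Deng \emph{et al.}) are exactly what is needed. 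Second, that $f^*$ and the family of minimum edge-cuts are preserved: flow decomposition shows the $e$-gadget behaves like an undirected capacity-$1$ edge, so $f^*$ is unchanged, and for each original cut-edge $e$ with $v_1$ on the source side, putting $\overrightarrow{e}_{v_1}$ into the cut turns any minimum edge-cut of $D$ into one of $\overrightarrow{D}$ of the same size consisting only of player-arcs, which also shows the core of the flow game on $\overrightarrow{D}$ is nonempty and gives the value-preserving correspondence of minimum cuts that makes the explicit description of the least-core in Theorem~\ref{least core theorem} transfer. Everything else --- polynomiality of the construction, the path bijection, and the swap-symmetry argument --- is routine bookkeeping.
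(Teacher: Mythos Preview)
Your proposal is correct and follows the same route as the paper: the edge-gadget replacing each undirected edge by a directed $4$-cycle for EPC-games, and the composition vertex-splitting $\to$ edge-gadget for VPC-games, together with the claim that solutions transfer one-to-one. The paper itself gives only the construction and the sentence ``it is easy to check that there exists one-to-one correspondence for the game solution \dots\ between the two games''; you supply the missing justification---the path-correspondence lemma, the observation that the directed object is really a path game with public arcs, the swap-symmetry descending the nucleolus back to $E$, and the preservation of $f^*$ and of minimum cuts---so your argument is strictly more detailed than what the paper records, but not a different approach.
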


\bibliography{ref}

\end{document}